\newcommand{\perm}{\operatorname{perm}}
\theoremstyle{plain}
\def\dotminus{\mathbin{\ooalign{\hss\raise1ex\hbox{.}\hss\cr
  \mathsurround=0pt$-$}}} 
\DeclareFontFamily{U}{bigshuffle}{}
\DeclareFontShape{U}{bigshuffle}{m}{n}{
  <5-8> s*[1.7] shuffle7
  <8->  s*[1.7] shuffle10
}{}
\DeclareSymbolFont{BigShuffle}{U}{bigshuffle}{m}{n}
\DeclareMathSymbol\bigshuffle{\mathop}{BigShuffle}{"001}
\DeclareMathSymbol\bigcshuffle{\mathop}{BigShuffle}{"002}
\newcommand{\pref}{\operatorname{Pref}}
\newcommand{\NL}{\textsf{NL}}
\newcommand{\PSPACE}{\textsf{PSPACE}}
\newcommand{\PTIME}{\textsf{P}}
\newenvironment{claiminproof}[1]{\medskip\par\noindent\underline{Claim:}\space#1}{}
\newenvironment{claimproof}[1]{\begin{quote}\par\noindent\emph{Proof of the Claim:}\space#1}{[\emph{End, Proof of the Claim}]\end{quote}}
\begin{document}
\title{State Complexity of Permutation and Related Decision Problems on Alphabetical Pattern Constraints}
\titlerunning{Permutation on Alphabetical Pattern Constraints}

%
%
\author{Stefan Hoffmann\orcidID{0000-0002-7866-075X}}
\authorrunning{S. Hoffmann}
%
\institute{Informatikwissenschaften, FB IV, 
  Universit\"at Trier,  Universitätsring 15, 54296~Trier, Germany, 
  \email{hoffmanns@informatik.uni-trier.de}}
\maketitle              
\begin{abstract}
  
  We investigate the state complexity of the permutation operation,
  or the commutative closure, on Alphabetical Pattern Constraints (APC).
  This class corresponds to level $3/2$ of the Straubing-Th{\'e}rien hierarchy 
  and includes the finite, the piecewise-testable, or $\mathcal J$-trivial,
  and the $\mathcal R$-trivial and $\mathcal L$-trivial languages.
  We give a sharp state complexity bound 
  expressed in terms of the longest strings in the unary projection
  languages of an associated finite language. This bound is already sharp for the subclass of finite languages.
  Additionally, for two
  subclasses, we give sharp bounds expressed in terms of the size of a recognizing
  input automaton and the size of the alphabet.
  Lastly, we investigate the inclusion and universality problem on APCs up to permutational equivalence. These two problems are known to be $\PSPACE$-complete on APCs in general, even for fixed alphabets.
  However, we show them to be decidable in polynomial time for fixed alphabets if
  we only want to solve them up to permutational equivalence.

\keywords{state complexity \and finite automata \and alphabetic pattern constraint language \and commutative closure \and inclusion problem} 
\end{abstract}
%
%
%


\section{Introduction}
\label{sec:introduction}

In \emph{regular model checking}~\cite{DBLP:conf/concur/AbdullaJNS04}, a set of initial configurations is modelled as a regular language and the actions of the system are modeled as a 
rewriting relation.
For example, suppose we have an arbitrary number of processes that are connected linearly and need access to a common resource, but only one at a time and in order, starting from the first processor. Then, the state of a given processor
could be modeled by $\Sigma = \{0,1\}$, where $1$ means the processor has access
to the resource, and $0$ otherwise. The set of initial configurations is then the regular languages $10^*$, where a specific initial configuration is determined by the number of processors involved.
The transition relation is given by the rule $10 \mapsto 01$
and the set of reachable configurations is the language $0^*10^*$.
The bad configurations are given by the language $(0+1)^*1(0+1)^*1(0+1)^*$,
and we see that intersection of this set with the reachable configurations
is empty.

The computation of the set of reachable configurations is the closure of the set of initial configurations under the rewriting relation. However, in this generality, the framework is Turing-complete and hence restrictions have to be imposed. In~\cite{DBLP:journals/iandc/BouajjaniMT07} the class
of \emph{Alphabetical Pattern Constraints} (APC) was introduced as a class
to describe initial and bad configurations, given by forbidden patterns,
that is closed
under semi-commutations. The constructions in~\cite{DBLP:journals/iandc/BouajjaniMT07} rely on an inductive transformation of an APC expression into another APC expression for the closure.
Here, our constructions will give a more direct and efficient procedure for the full commutative closure and will also yield deterministic automata, which we then use to devise polynomial time decision
procedures for the inclusion and universality problem up to permutational equivalence.

The state complexity of a regular language $L$
is the minimal number of states needed in a deterministic automaton
recognizing $L$. 
Investigating the state complexity of the result of a regularity-preserving operation on regular languages, depending
on the state complexity of the regular input languages,
was first initiated in~\cite{Mas70} and systematically started in~\cite{YuZhuangSalomaa1994}.
As the number of states of a recognizing automaton 
could be interpreted as the memory required to describe the recognized language
and is directly related to the runtime of algorithms employing regular languages, obtaining
state complexity bounds is a natural question with applications in verification, natural language
processing or software engineering~\cite{GaoMRY17}.

In general, the permutation operation is not regularity-preserving.
But it is regularity-preserving on finite languages, APCs and on group languages~\cite{DBLP:journals/iandc/BouajjaniMT07,DBLP:journals/iandc/GomezGP13,DBLP:conf/dcfs/Hoffmann20}.
The state complexity on group languages was studied in~\cite{DBLP:conf/dcfs/Hoffmann20},
but it is not known if the derived bounds are tight.
 The state complexity of the permutation operation
 on finite languages was first investigated
 in~\cite{DBLP:journals/tcs/ChoGHKPS17,DBLP:conf/dcfs/PalioudakisCGHK15}.
 However, sharp bounds were only obtained for subclasses and it is unknown
 if the general bound stated in~\cite{DBLP:journals/tcs/ChoGHKPS17,DBLP:conf/dcfs/PalioudakisCGHK15}
 is sharp. Surely, every finite language is an APC.

 The \emph{dot-depth hierarchy}~\cite{DBLP:journals/jcss/CohenB71}
 is an infinitely increasing hierarchy whose union is the class
 of star-free languages. This hierarchy was motivated by alternately
 increasing the combinatorial and sequential complexity of languages
 and corresponding recognizing devices~\cite{DBLP:journals/ita/Brzozowski76,DBLP:conf/birthday/Pin17a}.
 Later, the more fundamental \emph{Straubing-Th{\'e}rien hierarchy} 
 was introduced~\cite{DBLP:journals/mst/PlaceZ19,DBLP:journals/tcs/Straubing81,DBLP:journals/tcs/Therien81}.
 Here, we start with $\{ \emptyset, \Sigma^* \}$
 at level zero and, alternately, build (1) the half-levels: finite unions 
 of \emph{marked products} of
 the form $L_0 a_1 L_1 a_1 \cdots a_k L_k$
 with $k \ge 0$, $a_1, \ldots, a_k \in \Sigma$
 and $L_1, \ldots, L_k$
 from the previous full-level
 or, (2) the full levels:
 the \emph{Boolean closure} of the previous half-level. 
 More formally, 
 set $\mathcal L(0) = \{ \emptyset, \Sigma^* \}$
 and for $n \ge 0$, level $\mathcal L(n+\frac{1}{2})$
 consists of all finite unions of languages 
 $L_0 a_1 L_1 a_2 \cdots a_k L_k$ with $k \ge 0$,
 $L_0, \ldots, L_k \in \mathcal L(n)$
 and $a_1, \ldots, a_k \in \Sigma$, 
 and level $\mathcal L(n+1)$ consists of all finite Boolean
 combinations of languages from level $\mathcal L(n+\frac{1}{2})$.
 Every star-free language is contained in some level 
 of this hierarchy, which is also infinitely increasing.
 The different levels could also be characterized
 logically by the quantifier alternation of first order sentences~\cite{DBLP:journals/jcss/Thomas82}.

 The membership problem and related decision and separation problems
 with respect to the levels of both hierachies, and their connection 
 to logic, have sparked much interest~\cite{DBLP:journals/iandc/KrotzschMT17,DBLP:journals/mst/PlaceZ19,DBLP:journals/jcss/Thomas82}.
 The APCs precisely correspond to the languages of level $3/2$
 in the Straubing-Th{\'e}rien hierarchy~\cite{DBLP:journals/iandc/BouajjaniMT07,DBLP:journals/iandc/KrotzschMT17}.

 Green's relations are five equivalence relations, 
 named $\mathcal H$, $\mathcal R$, $\mathcal L$, $\mathcal J$ and $\mathcal D$,
 that characterize the elements of a semigroup in terms of the principal ideals
 they generate~\cite{Green51}.
 By the notion of the syntactic monoid~\cite{McNaughton71,DBLP:journals/tit/Schutzenberger56},
 these relations entered into formal language theory and
 proved to be useful in the classification of formal languages~\cite{DBLP:conf/lata/Colcombet11,Pin86,DBLP:reference/hfl/Pin97}.
 For example, it turned out that the $\mathcal J$-trivial, or piecewise-testable languages,
 are precisely the languages of level one in the Straubing-Th{\'e}rien hierarchy~\cite{DBLP:reference/hfl/Pin97,Simon75}.
 The $\mathcal H$-trivial languages are precisely the star-free languages~\cite{Schutzenberger65a}.
 Also, the $\mathcal R$-trivial and the $\mathcal L$-trivial languages
 are properly contained in level $3/2$ of the Straubing-Th{\'e}rien hierarchy, i.e., are APCs~\cite{DBLP:journals/iandc/BouajjaniMT07,BrFi80,DBLP:journals/iandc/KrotzschMT17}.


\section{Preliminaries and Definitions} 
\label{sec:preliminaries}

We assume the reader to have some basic knowledge of automata and complexity theory. For all unexplained notions, as, for example, regular expressions, the Nerode equivalence relation and 
more formal definitions of 
$\PSPACE$, the class of problems solvable with polynomially bounded space,
and $\PTIME$, the class of problems solvable in polynomial time,
we refer the reader to~\cite{HopUll79}.

For an alphabet (finite nonempty set) $\Sigma$, denote by $\Sigma^*$
the set of all finite words over the alphabet $\Sigma$ including
the \emph{empty word} $\varepsilon$.
If $u \in \Sigma^*$ and $a \in \Sigma$,
by $|u|$ we denote the \emph{length}
of $u$ and by $|u|_a$ the number of occurrences of
the symbol $a$ in $u$. 
A \emph{language} over $\Sigma$
is any subset of $\Sigma^*$. Let $L \subseteq \Sigma^*$. We set $\pref(L) = \{ u \in \Sigma^* \mid \exists v \in \Sigma^* : uv \in L \}$.
A word $u \in \Sigma^*$
is a \emph{prefix} of a word $v \in \Sigma^*$, if $u \in \pref(\{v\})$.
For $a \in \Sigma$, the \emph{one-letter projection language} is
$\pi_a(L) = \{ a^{|u|_a} : u \in L \}$ and, for $u \in \Sigma^*$, we set $\pi_a(u) = a^{|u|_a}$.

For a natural number $n \ge 0$, we set $[n] = \{0,\ldots,n-1\}$.
For a finite subset $A$ of natural numbers, by $\max A$ and $\min A$ 
we denote the maximal and minimal element in $A$
 with respect to the usual order, where we set $\max\emptyset = \min\emptyset = 0$.

A \emph{nondeterministic finite automaton (NFA)} is given by 
$\mathcal A = (\Sigma,Q,\delta, q_0, F)$,
where $\Sigma$ is an \emph{input alphabet},
$Q$ a finite set of \emph{states}, $\delta : Q \times \Sigma \to 2^Q$
the \emph{transition function}, having a set of states as image,
$q_0$ the \emph{initial state} and $F \subseteq Q$
the set of \emph{final states}.
If, for any $q \in Q$ and $a \in \Sigma$, we
have $|\delta(q, a)| \le 1$, then $\mathcal A$
is called a \emph{partial deterministic finite automaton (PDFA)}.
If $\mathcal A$ is a PDFA, then the transition
function is often written as a partial function $Q \times \Sigma \to Q$.
In the usual way, the transition function $\delta$ 
can be extended to the domain $Q \times \Sigma^*$.
The \emph{language recognized} by $\mathcal A$
is $L(\mathcal A) = \{ u \in \Sigma^* \mid \delta(q_0, u) \cap F \ne \emptyset \}$.
The finite \emph{simple language} 
associated
with $\mathcal A$ is
$L^{\operatorname{simple}}(\mathcal A)
  = \{ w \in \Sigma^* \mid \text{$w$ labels
  a simple accepting path in $\mathcal A$} \}$,
 where a path is \emph{simple} if no state occurs more than once along the path, i.e.,
 the states we end up after each prefix (along the path) are distinct for distinct
 prefixes, and a path is \emph{accepting}
 if it starts at the initial state of $\mathcal A$
 and ends in a final state.
The language $L^{\operatorname{simple}}(\mathcal A)$ is the set of all words in $L(\mathcal A)$ that label paths with no loops\footnote{The
length of the longest word in $L^{\operatorname{simple}}(\mathcal A)$
is called the \emph{depth} in~\cite{DBLP:journals/corr/abs-1907-13115}.}.

 \begin{lemmarep}
 \label{lem:max_length_letter}
  Let $a \in \Sigma$
  and $n =  \max\{ |u|_{a} \mid u \in L^{\operatorname{simple}}(\mathcal A) \} + 1$.
  Then, for any $w \in \Sigma^*$ with\footnote{The assumption $|w|_{a} \ge n$
  is needed. For example, consider $L = a^{n-1}b^*$.} $|w|_{a} \ge n$, we have:
  $
   w \in \perm(L) \Leftrightarrow  wa \in \perm(L).
  $
    \end{lemmarep}
    \begin{proof}
      If $w \in \perm(L)$ is such that $|w|_{a_j} \ge n_j$,
      then in $\mathcal A$ we have to enter a loop labeled by $a_j$ when
      reading $w$, as after removing the loops every 
      accepting path has at most $n_j - 1$ transitions labeled by the letter $a_j$.
      So, we can read in $a_j$ one additional time and find $wa_j \in \perm(L)$.
      Conversely, if $wa_j \in \perm(L)$, with the same reasong, we find
      that we have to traverse a loop labeled by $a_j$ at least once,
      which could be left out and so $w \in \perm(L)$.
    \end{proof}

The \emph{state complexity} of a regular language is the smallest number of states 
in any PDFA recognizing the language.

Let $\mathcal A = (\Sigma, Q, \delta, q_0, F)$.
A state $q \in Q$ is said
to be \emph{reachable} from a state $p \in Q$, 
if there exists $u \in \Sigma^*$
such that $q \in \delta(p, u)$.

An automaton is called \emph{partially ordered}, if the reachability 
relation is a partial order. Equivalently, if the only loops are self-loops. Partially ordered automata are also known as \emph{weakly acyclic} automata~\cite{Ryzhikov19a}.

The \emph{shuffle operation} of two languages $U, V \subseteq \Sigma^*$ is defined by
\begin{multline*}
 U \shuffle V  = \{ w \in \Sigma^*  \mid  w = x_1 y_1 x_2 y_2 \cdots x_n y_n 
    \mbox{ for some words } \\ x_1, \ldots, x_n, y_1, \ldots, y_n \in \Sigma^* 
    \mbox{ such that } x_1 x_2 \cdots x_n \in U \mbox{ and } y_1 y_2 \cdots y_n \in V\}.
\end{multline*}
and $u \shuffle v = \{u\}\shuffle \{v\}$ for $u,v \in \Sigma^*$.
For languages $L_1, \ldots, L_n \subseteq \Sigma^*$, we set $\bigshuffle_{i=1}^n L_i = L_1 \shuffle \ldots \shuffle L_n$.
Let $L \subseteq \Sigma^*$. If $L = \bigshuffle_{a \in \Sigma} \{ a^{|u|_a} \mid u \in L \}$, then we call it a \emph{strict shuffle language}. 

\begin{example} Let $\Sigma = \{a,b\}$.
\begin{enumerate}
\item If $u \in \Sigma^*$, then $\perm(u)$ is a strict shuffle language.
\item The language $\{ u \in \{a,b\}^* \mid |u|_a = 1 \mbox{ and } 2 \le |u| \le n \} = \{ a\} \shuffle \{ b, b^2, \ldots, b^{n-1} \}$
 is a strict shuffle language.
\item $\perm(\{aabb, ab\})$ is not a strict shuffle language.
\item $\perm(\{ aaabbb, abbb, aaab, ab \})$ is a strict shuffle language.
\end{enumerate}
\end{example}

The \emph{permutation operation}, or \emph{commutative closure},
on a language is the set of words that we get when permuting the letters
of the words from the language. Formally, for $L \subseteq \Sigma^*$,
we set $\perm(L) = \{ u \in \Sigma^* \mid \exists v \in L \ \forall a \in \Sigma : |u|_a = |v|_a \}$. For example, $\perm(\{abb\}) = \{ abb, bab, bba \}$.
For $u \in \Sigma^*$, we also write $\perm(u)$ for $\perm(\{u\})$.
A language $L \subseteq \Sigma^*$
is called \emph{commutative}, if $\perm(L) = L$.
Note that for strict shuffle languages
$L \subseteq \Sigma^*$ we have $\perm(L) = L$.

An \emph{Alphabetical Pattern Constraint} (APC)
is an expression\footnote{With the shorthand $\Gamma^* = (a_1 + \ldots + a_n)^*$
for $\Gamma = \{a_1, \ldots, a_n\} \subseteq \Sigma$.}
$p_1 + \ldots + p_n$,
where each $p_i$ is of the form $\Sigma_0^* a_1 \Sigma_1^* \cdots a_n \Sigma_n^*$
with $\Sigma_0, \ldots, \Sigma_n \subseteq \Sigma$
and $a_1, \ldots, a_n \in \Sigma$.
In the following, we will \emph{not distinguish between the expression
and the language} it denotes, and taking the liberty to denote ``$+$'' by the union symbol as well.
Hence, an APC is a finite union of languages of the form 
$\Sigma_0^* a_1 \Sigma_1^* \cdots a_n \Sigma_n^*$ as above. Equivalently,
as concatenation distributes over union, it is the closure
of the subsets $\Gamma^*$, $\Gamma \subseteq \Sigma$, and $\{a\}$ for $a \in \Sigma$
under concatenation and finite union\footnote{Note that, for $\Sigma_0, \Sigma_1 \subseteq \Sigma$
nonempty, we have $\Sigma_0^* \Sigma_1^* = \Sigma_0^* \cup \bigcup_{a \in \Sigma_1} \Sigma_0^* a \Sigma_1^*$.}.
The APCs are precisely the languages recognized
by partially ordered NFAs~\cite{DBLP:journals/iandc/KrotzschMT17,SchwentickTV01}.


\section{State Complexity Bound of Permutation on APCs}

The APCs are closed under the permutation operation. However, they are not closed under complementation.
For example, the complement of $\Sigma^*aa\Sigma^* \cup \Sigma^*bb\Sigma^* \cup b\Sigma^* \cup \Sigma^*a$
over $\Sigma = \{a,b\}$ is $(ab)^*$.
As $\perm((ab)^*) = \{ u \in \Sigma^* \mid |u|_a = |u|_b \}$ is not regular, it is not an APC.
This also shows that level $3/2$
of the Straubing-Th{\'e}rien hierarchy is the lowest level in which 
the permuation of any language is regular.

\begin{remark}
 Let $L = \bigcup_{i=1}^m \Sigma_0^{(i)} a_1^{(i)} \Sigma_1^{(i)}\cdots a_{n_i}^{(i)} \Sigma_{n_i}^{(i)}$.
 Set $\Gamma^{(i)} = \Sigma_0^{(i)} \cup \ldots \cup \Sigma_{n_i}^{(i)}$.
 Then,
 $ 
  \perm(L) 
  =\bigcup_{i=1}^m \perm(a_1^{(i)} \cdots a_{n_i}^{(i)} \Gamma^{(i)} )
  = \bigcup_{i=1}^m \perm(a_1^{(i)} \cdots a_{n_i}^{(i)}) \shuffle \Gamma^{(i)}. 
 $
 Hence, the permutational closure, as
 a finite union of languages of the form $\Gamma^* a_1 \Gamma^* \ldots a_n \Gamma^*$,
 is itself an APC.
\end{remark}

\begin{theorem}
\label{thm:apc_bound}
 Let $L$ be an APC recognized by a partially ordered NFA $\mathcal A$.
 Then, $\perm(L)$
 is recognizable by a PDFA that uses at most
 (where we set $\max\emptyset = 0$)
 %
 \[ 
  \prod_{a \in \Sigma} ( \max\{ |u|_a : u \in L^{\operatorname{simple}}(\mathcal A) \} + 1 )
 \]
  many states and this bound is sharp even for finite languages.
\end{theorem}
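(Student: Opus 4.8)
The plan is to split the claim into an upper-bound construction and a matching lower-bound family of finite languages.

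\medskip\noindent\textbf{Upper bound.}
For each $a \in \Sigma$ set $n_a = \max\{|u|_a : u \in L^{\operatorname{simple}}(\mathcal A)\}$. The key observation is Lemma~\ref{lem:max_length_letter}: once a word $w$ satisfies $|w|_a \ge n_a + 1$, appending another $a$ does not change membership in $\perm(L)$. This means $\perm(L)$ is completely determined by the Parikh vector of a word \emph{truncated coordinatewise at $n_a + 1$}. So the plan is to build a PDFA whose state set is the product $\prod_{a \in \Sigma} \{0, 1, \ldots, n_a + 1\}$ (hence $\prod_a (n_a + 2)$ states — wait, I need $\prod_a(n_a+1)$), i.e.\ I should truncate at $n_a$, with the understanding that state $n_a$ is a ``saturated'' state absorbing all further $a$'s. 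Formally: states are vectors $(c_a)_{a \in \Sigma}$ with $0 \le c_a \le n_a$; the initial state is the all-zero vector; reading a letter $b$ increments the $b$-coordinate but caps it at $n_b$ (so $\delta((c_a)_a, b)$ has $b$-coordinate $\min(c_b+1, n_b)$ and all others unchanged); and a state is final iff it is reachable, via reading some word, from a Parikh vector of a word in $L$ under the same capping, equivalently iff it equals the capped Parikh vector of some $w \in \perm(L)$. By Lemma~\ref{lem:max_length_letter} (applied repeatedly, once per letter, to strip excess occurrences) two words with the same capped Parikh vector are $\perm(L)$-equivalent, so this automaton is well defined and recognizes $\perm(L)$; it is deterministic, partial (actually complete), and has exactly $\prod_{a\in\Sigma}(n_a+1)$ states. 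Decidability of which states are final follows since $\perm(L)$ is an APC (Remark before the theorem) and membership is decidable.

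\medskip\noindent\textbf{Lower bound (sharpness for finite languages).}
Here the plan is to exhibit, for any prescribed positive integers $(n_a)_{a \in \Sigma}$, a finite language $L$ with a partially ordered NFA $\mathcal A$ achieving $\max\{|u|_a : u \in L^{\operatorname{simple}}(\mathcal A)\} = n_a - 1$ (so the bound reads $\prod_a n_a$) for which $\perm(L)$ genuinely needs $\prod_a n_a$ states. The natural candidate over $\Sigma = \{a_1, \ldots, a_k\}$ is a language forcing every residue class: take
\[
 L = \bigl\{\, a_1^{i_1} a_2^{i_2} \cdots a_k^{i_k} \ :\ 0 \le i_j \le n_j - 1,\ \textstyle\sum_j i_j \text{ satisfies some condition}\,\bigr\},
\]
or more cleanly a single-word or few-word language whose permutation closure is the set of all Parikh vectors bounded by $(n_a - 1)$ in each coordinate minus/plus a sparse correction — the point is to arrange that the $\prod_a n_a$ ``capped Parikh vector'' classes are pairwise Nerode-inequivalent. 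Concretely, I expect it suffices to take $L = \perm(a_1^{n_1-1} a_2^{n_2-1} \cdots a_k^{n_k-1})$ itself, or a slight variant, and show that for two distinct truncated vectors $\vec{c} \ne \vec{c}'$ there is a distinguishing suffix $a_1^{n_1-1-c_1'} \cdots$ (completing one but not the other to a word in $\perm(L)$); the finiteness of $L$ and the bound on simple paths are then immediate from the definition. One must also cite or reuse the existing lower-bound results of~\cite{DBLP:journals/tcs/ChoGHKPS17,DBLP:conf/dcfs/PalioudakisCGHK15} for finite languages if they already give this product form for suitable subclasses.

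\medskip\noindent\textbf{Main obstacle.}
The delicate part is the lower bound: one needs a finite language whose permutation closure is ``as spread out as possible'', meaning all $\prod_a n_a$ truncation classes are distinguishable, \emph{and} simultaneously the longest simple path of some partially ordered NFA for it has exactly $n_a - 1$ occurrences of each $a$ — it is easy to accidentally make the language too small (collapsing classes) or force a larger NFA parameter. The well-definedness check in the upper bound — that capping commutes with the transition structure and that Lemma~\ref{lem:max_length_letter} can be iterated over all letters without interference (removing excess $a$'s does not disturb the count of $b$'s) — is routine but should be stated carefully, since the letters are stripped one at a time.
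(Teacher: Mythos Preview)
Your upper-bound construction has an off-by-one gap that breaks correctness. You set $n_a = \max\{|u|_a : u \in L^{\operatorname{simple}}(\mathcal A)\}$ and cap each coordinate at $n_a$ with an \emph{unconditional} self-loop. But Lemma~\ref{lem:max_length_letter} only guarantees $w \in \perm(L) \Leftrightarrow wa \in \perm(L)$ once $|w|_a \ge n_a + 1$; it says nothing about the passage from $|w|_a = n_a$ to $|w|_a = n_a + 1$. Over $\Sigma = \{a\}$ with $L = \{a^m\}$ one has $n_a = m$; your automaton has states $\{0,\ldots,m\}$, a self-loop on $a$ at state~$m$, and state~$m$ final, so it wrongly accepts $a^{m+1}$. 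No assignment of final states can repair this, because $a^m$ and $a^{m+1}$ reach the same state yet are not $\perm(L)$-equivalent. Capping at $n_a + 1$ instead would make Lemma~\ref{lem:max_length_letter} apply directly, but gives $\prod_a(n_a+2)$ states---precisely the overshoot you noticed and then tried to ``fix'' by lowering the cap.

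The paper closes this gap by making the automaton genuinely \emph{partial}: at the top value for coordinate~$j$, the transition on $a_j$ is a self-loop only when $a_1^{s_1}\cdots a_k^{s_k}a_j \in \perm(\pref(L))$, and is left undefined otherwise. Intuitively, one further $a_j$ is permitted precisely when some path in $\mathcal A$ consistent with the current counts traverses an $a_j$-labelled self-loop; if not, no extension can lie in $L$, and dropping the transition (rather than routing to a fresh sink) is what saves one state per coordinate and brings the total down to $\prod_a(n_a+1)$. Your lower-bound sketch via (the permutation closure of) a single word $a_1^{m_1}\cdots a_k^{m_k}$ and a Nerode argument is essentially the paper's Remark~\ref{rem:lower_bound_shuffle_lang} and is correct.
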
 
\begin{proof}
 Suppose we have $k$ symbols
 and $\Sigma = \{a_1, \ldots, a_k \}$.
 Set $n_j = \max\{ |u|_{a_j} \mid u \in L^{\operatorname{simple}}(\mathcal A) \} + 1$ for $j \in \{1,\ldots, k\}$.
     Construct $\mathcal B = (\Sigma, Q, \delta, q_0, F)$
     with $Q = [n_1+1] \times \ldots \times [n_k+1]$
     and
     \begin{multline*}
      \delta((s_1, \ldots, s_k), a_j) 
       = \\ \left\{ \begin{array}{ll}
          (s_1, \ldots, s_{j-1}, s_j + 1, s_{j+1}, \ldots, s_k) & \mbox{if } s_j < n_j ;\\ 
          (s_1, \ldots, s_k)                        & \mbox{if } s_j = n_j \mbox{ and } a_1^{s_1} \cdots a_k^{s_k}a_j \in \perm(\pref(L)).
          \end{array}\right.
     \end{multline*}
    Also $q_0 = (0,\ldots, 0)$ and
     $
      F = \{ \delta(q_0, w) \mid w \in L \mbox{ and } \forall j \in \{1,\ldots,k\} : |w|_{a_j} \le n_j \}. 
     $
    
    \begin{claiminproof}
     We have $L(\mathcal B) = \perm(L)$.
    \end{claiminproof} 
    \begin{claimproof}
      By Lemma~\ref{lem:max_length_letter}, for any $w \in \Sigma^*$ with $|w|_{a_j} \ge n_j$, we have 
  \[
   w \in \perm(L) \Leftrightarrow  wa_j \in \perm(L).
  \]
  Let $w \in \perm(L)$.
  Then $a_1^{\min\{ n_1, |w|_{a_1} \}} \cdots a_k^{\min\{ n_k, |w|_{a_k} \}} \in \perm(L)$.
  Hence, 
  \[ 
   \delta(q_0, a_1^{\min\{ n_1, |w|_{a_1} \}} \cdots a_k^{\min\{ n_k, |w|_{a_k} \}}) \in F.
  \]
  Furthermore, if $|w|_{a_j} \ge n_j$,
  then, for  \[ q =  (\min\{ n_1, |w|_{a_1} \},\ldots,\min\{ n_k, |w|_{a_k} \}), \]
  we have
  $
   \delta(q, a_j) = q
  $
  for any $j \in \{1,\ldots,k\}$ such that $|w|_{a_j} \ge n_j$.
  So, \[ 
  \delta(q_0, w) = \delta(q_0, a_1^{\min\{ n_1, |w|_{a_1} \}} \cdots a_k^{\min\{ n_k, |w|_{a_k} \}}) \in F.
  \]
  
  Conversely, suppose $\delta(q_0, w) \in F$.
  If $|w|_{a_j} > n_j$ with $j \in \{1,\ldots,k\}$, then, for  the state  $q =  (\min\{ n_1, |w|_{a_1} \},\ldots,\min\{ n_k, |w|_{a_k} \})$,
  we have
  $
   \delta(q, a_j) = q
  $
  and so $\perm(a_1^{\min\{ n_1, |w|_{a_1} \}} \cdots a_k^{\min\{ n_k, |w|_{a_k} \}}a_j) \subseteq \perm(L)$
  by the above definition of the transition function $\delta$.
  Hence, the letter $a_j$ could be appended $n_j - |w|_{a_j}$ many times and $\mathcal B$ stays in the same
  state, for every such letter with $|w|_{a_j} > n_j$.
  So, we find $ \perm(a_1^{|w|_{a_1}} \cdots a_k^{|w|_{a_k}}) \subseteq \perm(L), $
  which is equivalent to $w \in \perm(L)$.
  \end{claimproof}
  That the bound is sharp is shown
 in Remark~\ref{rem:lower_bound_shuffle_lang}.~\qed
\end{proof} 

\begin{lemmarep}
\label{lem:relation_sc_simple}
 Let $\mathcal A = (\Sigma, Q, \delta, q_0, F)$
 be a partially ordered NFA.
 If any NFA for $L^{\operatorname{simple}}(\mathcal A)$
 needs at least $n$ states, then $|Q| \ge n$.
 A similar statement holds true for PDFAs.
\end{lemmarep}
\begin{proof}
 By deleting all self-loops in $\mathcal A$, we 
 find an automaton for $L^{\operatorname{simple}}(\mathcal A)$
 with $|Q|$ many states. Similarly if $\mathcal A$
 is a PDFA.~\qed
\end{proof}

Let $\mathcal A = (\Sigma, Q, \delta, q_0, F)$ be a partially ordered NFA.
As $L^{\operatorname{simple}}(\mathcal A)$ is finite, every path from the start state
to a final state in any recognizing automaton has no loops. Hence, the length of a longest string in $L^{\operatorname{simple}}(\mathcal A)$
is a lower bound for the number of states of any NFA
recognizing $L^{\operatorname{simple}}(\mathcal A)$. 
Surely, for $a \in \Sigma$ and $u \in L^{\operatorname{simple}}(\mathcal A)$, the number
$|u|_a$ is a lower bound for the length of the longest string in $L^{\operatorname{simple}}(\mathcal A)$.
So, combining with Lemma~\ref{lem:relation_sc_simple},
we have $\max\{ |u|_a : u \in L^{\operatorname{simple}}(\mathcal A) \} \le |Q|$ for $a \in \Sigma$.
This yields the next corollary to Theorem~\ref{thm:apc_bound}.

\begin{corollary}
 Let $L$ be an APC recognized by a partially ordered NFA with $n$ states.
 Then, $\perm(L)$ is recognizable by a PDFA with at most $n^{|\Sigma|}$ many states.
\end{corollary}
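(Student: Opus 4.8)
The corollary follows by combining Theorem~\ref{thm:apc_bound} with the two observations just established before its statement. The plan is to take a partially ordered NFA $\mathcal A$ with $n$ states recognizing $L$ and to bound each factor in the product of Theorem~\ref{thm:apc_bound} by $n$.

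First I would recall that Theorem~\ref{thm:apc_bound} gives a PDFA for $\perm(L)$ with at most $\prod_{a \in \Sigma}(\max\{|u|_a : u \in L^{\operatorname{simple}}(\mathcal A)\} + 1)$ states. So it suffices to show $\max\{|u|_a : u \in L^{\operatorname{simple}}(\mathcal A)\} + 1 \le n$ for every $a \in \Sigma$, since then the product is at most $n^{|\Sigma|}$.

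Next I would justify the per-letter bound. Since $L^{\operatorname{simple}}(\mathcal A)$ consists of words labeling simple (loop-free) accepting paths in $\mathcal A$, any such word $u$ has length at most $n-1$: a simple path visits at most $n$ distinct states and hence uses at most $n-1$ transitions. In particular $|u|_a \le |u| \le n - 1$ for each $a \in \Sigma$, so $\max\{|u|_a : u \in L^{\operatorname{simple}}(\mathcal A)\} + 1 \le n$. (Alternatively, one may invoke Lemma~\ref{lem:relation_sc_simple}: deleting all self-loops from $\mathcal A$ yields an NFA with $n$ states for $L^{\operatorname{simple}}(\mathcal A)$, and since every accepting path in that automaton is loop-free the longest accepted word has length at most $n-1$, which again bounds $|u|_a$ from above by $n-1$.)

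There is no real obstacle here; the only thing to be careful about is the off-by-one bookkeeping, i.e. that a path through $n$ states has $n-1$ edges, so the ``$+1$'' in the theorem's bound is exactly absorbed. Substituting $\max\{|u|_a : u \in L^{\operatorname{simple}}(\mathcal A)\} + 1 \le n$ into the product of Theorem~\ref{thm:apc_bound} gives the stated bound $n^{|\Sigma|}$, completing the proof.
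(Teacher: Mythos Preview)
Your proposal is correct and follows essentially the same approach as the paper: bound each factor in the product of Theorem~\ref{thm:apc_bound} by $n$ using the fact that words in $L^{\operatorname{simple}}(\mathcal A)$ label simple paths and hence have length at most $n-1$. The paper phrases this via Lemma~\ref{lem:relation_sc_simple} and a lower-bound argument on NFAs for finite languages, which you also mention as an alternative; your direct counting of edges on a simple path is slightly cleaner on the off-by-one but otherwise identical in spirit.
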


We have formulated Theorem~\ref{thm:apc_bound} and the above corollary
in terms of partially ordered NFAs recognizing a given APC.
However, APC expressions and partially ordered NFAs are closely connected, for example, see
Lemma~\ref{lem:NFA_to_APC_P} in Section~\ref{sec:complexity}. Hence a corresponding statement could be made
for APC expressions, where $L^{\operatorname{simple}}(\mathcal A)$ corresponds
to the set of words resulting if we delete all subexpressions $\Sigma_i^{*}$
in the parts of the unions.

%
%

\section{When $\perm(L^{\operatorname{simple}}(\mathcal A))$ is a Strict Shuffle Language}
\label{sec:shufflelang}


Here, we investigate a class of languages
for which we can devise a sharp bound expressed in the size of the input NFA.
The bound is formulated with the number
of states and the size of the alphabet of the input automaton. As the bound
is sharp for a subclass of languages, it also yields a lower bound for the general case.

%
%

 For finite strict shuffle languages, we can derive the following lower bound
 for the size of recognizing NFAs, which we will need in the proof of Theorem~\ref{thm:shuffle_lang_bound}.
 
 \begin{lemmarep} 
 \label{lem:sc_lower_bound}
  Let $L\subseteq \Sigma^*$ be finite.
  If $\perm(L)$ is a strict shuffle language,
  then any NFA recognizing $L$ 
  needs at least
  $
   ( \sum_{a \in \Sigma} \max\{ |u| : u \in \pi_a(L) \} ) + 1
  $ 
  many states. 
 \end{lemmarep}
 \begin{proof}
  Suppose $\Sigma = \{a_1, \ldots, a_k\}$.
  Set $m_i = \max \{ |u| \mid u \in \pi_{a_i}(L)\}$ for $i \in \{1,\ldots, k\}$ and $w = a_1^{m_1} \cdots a_k^{m_k}$.
  Then $w \in \pi_{a_1}(L) \shuffle \ldots \shuffle \pi_{a_k}(L) = \perm(L)$.
  Hence, we find $u \in L$ with $u \in \perm(w)$.
  
  If $v \in L$ is arbitrary, then $|v| = \sum_{i=1}^k |v|_{a_i} \le \sum_{i=1}^k m_i = |u|$.
  So, $|u| = \max \{ |w'| \mid w' \in L \}$.
  For finite languages, as we could not have any loops on a path from the start
  state to some final state, the length of the longest string in the language plus one
  is a lower bound for the number of states for any recognizing NFA.
  As $|u|$ is a longest string in $L$, this gives
  the claim.~\qed
 \end{proof}

 
 
 Next, we state the main result of this section.
 
\begin{theorem}
\label{thm:shuffle_lang_bound}
 Let $L$ be an APC language recognized by a partially ordered NFA~$\mathcal A$ with $n$ states
 such that $\perm(L^{\operatorname{simple}}(\mathcal A))$
 is a strict shuffle language.
 Then, $\perm(L)$ is recognizable by a PDFA with at most
 \[
   \left\lceil\frac{n-1}{|\Sigma|}  + 1 \right\rceil^{|\Sigma|}
 \]
 many states and this bound is sharp even for finite languages.
\end{theorem}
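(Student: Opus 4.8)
The plan is to read off the upper bound from Theorem~\ref{thm:apc_bound} and bound the product appearing there in terms of $n$, using the strict shuffle hypothesis, and then to establish sharpness by a one-word language. Write $\Sigma=\{a_1,\dots,a_k\}$ and set $m_j=\max\{|u|_{a_j}\mid u\in L^{\operatorname{simple}}(\mathcal A)\}$, so that Theorem~\ref{thm:apc_bound} already yields a PDFA for $\perm(L)$ with at most $\prod_{j=1}^k(m_j+1)$ states. This is where the hypothesis enters: $L^{\operatorname{simple}}(\mathcal A)$ is finite and, by assumption, $\perm(L^{\operatorname{simple}}(\mathcal A))$ is a strict shuffle language, so Lemma~\ref{lem:sc_lower_bound} applies to it; since $\max\{|v|\mid v\in\pi_{a_j}(L^{\operatorname{simple}}(\mathcal A))\}=m_j$, it shows that every NFA for $L^{\operatorname{simple}}(\mathcal A)$ has at least $\big(\sum_{j=1}^k m_j\big)+1$ states. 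On the other hand, deleting the self-loops of $\mathcal A$ yields an $n$-state NFA for $L^{\operatorname{simple}}(\mathcal A)$ (Lemma~\ref{lem:relation_sc_simple}), so $\sum_{j=1}^k m_j\le n-1$.

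It therefore remains to show that $\prod_{j=1}^k x_j\le\big\lceil\tfrac{n-1}{k}+1\big\rceil^{k}$ whenever $x_1,\dots,x_k$ are positive integers with $\sum_{j=1}^k x_j\le(n-1)+k$; applying this with $x_j=m_j+1$ completes the upper bound. I would prove it by the usual exchange argument: in a product-maximizing tuple we may assume $\sum_j x_j=(n-1)+k=:S$ (otherwise raise some $x_j$, strictly increasing the product), and that $x_i\le x_j+1$ for all $i,j$ (otherwise, if $x_i\ge x_j+2$, replacing $x_i,x_j$ by $x_i-1,x_j+1$ strictly increases the product, as $(x_i-1)(x_j+1)>x_ix_j$). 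Hence each $x_j$ lies in $\{\lfloor S/k\rfloor,\lceil S/k\rceil\}$, so $\prod_j x_j\le\lceil S/k\rceil^{k}$, and $\lceil S/k\rceil=\lceil(n-1)/k\rceil+1=\big\lceil(n-1)/k+1\big\rceil$.

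For sharpness, fix $k=|\Sigma|$ and, for each $q\ge 0$, take $n=qk+1$ and $L=\{a_1^q a_2^q\cdots a_k^q\}$ over $\Sigma=\{a_1,\dots,a_k\}$. The minimal PDFA $\mathcal A$ of $L$ is a chain of $qk+1=n$ states, hence acyclic and in particular partially ordered; its only simple accepting path spells $a_1^q\cdots a_k^q$, so $L^{\operatorname{simple}}(\mathcal A)=L$ and $\perm(L^{\operatorname{simple}}(\mathcal A))=\perm(L)=a_1^q\shuffle a_2^q\shuffle\cdots\shuffle a_k^q$ is a strict shuffle language. It remains to check that $\perm(L)=\{w\in\Sigma^*\mid |w|_{a_j}=q\text{ for all }j\}$ has state complexity exactly $(q+1)^k$: the PDFA with state set $[q+1]^k$ that increments the $j$-th coordinate on reading $a_j$ (with no transition once that coordinate would exceed $q$) and accepts only at $(q,\dots,q)$ recognizes $\perm(L)$, while two distinct reachable states $(c_1,\dots,c_k)\neq(c_1',\dots,c_k')$, say with $c_i<c_i'$, are separated by $\prod_{j=1}^k a_j^{\,q-c_j'}$. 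Since $(q+1)^k=\big\lceil\tfrac{n-1}{k}+1\big\rceil^{k}$, the bound is attained by a finite language.

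The main points requiring care are the exchange argument together with the ceiling identity $\lceil(n-1+k)/k\rceil=\lceil(n-1)/k+1\rceil$ in the second paragraph, and the distinguishability of the $(q+1)^k$ states of $\perm(L)$ in the third; I do not foresee a genuine obstacle beyond these routine verifications.
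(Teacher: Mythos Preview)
Your proof is correct and follows essentially the same approach as the paper: combine the product bound from Theorem~\ref{thm:apc_bound} with the constraint $\sum_j m_j\le n-1$ obtained via Lemma~\ref{lem:sc_lower_bound} and Lemma~\ref{lem:relation_sc_simple}, optimize the product under this constraint, and use the single-word example $a_1^q\cdots a_k^q$ (the paper's Remark~\ref{rem:lower_bound_shuffle_lang}) for sharpness. The only cosmetic difference is that you carry out the optimization over integers via an exchange argument, whereas the paper optimizes over reals and then takes the ceiling; your version is slightly more careful but not substantively different.
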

\begin{proof}
 By Lemma~\ref{lem:sc_lower_bound},
 any automaton for $L^{\operatorname{simple}}(\mathcal A)$
 needs at least $(\Sigma_{a \in \Sigma} \max\{ |u|_a : u \in \pi_a(L) \}) + 1$
 many states.
 So, by Lemma~\ref{lem:relation_sc_simple} we have $0 \le (\Sigma_{a \in \Sigma} \max\{ |u|_a : u \in \pi_a(L) \}) + 1 \le n$.
 The value
 $
   \prod_{a \in \Sigma} ( \max\{ |u|_a : u \in L^{\operatorname{simple}}(\mathcal A) \} + 1 ) 
 $
 from Theorem~\ref{thm:apc_bound} with the constraint 
 $0 \le ( \Sigma_{a \in \Sigma} \max\{ |u|_a : u \in \pi_a(L) \} ) + 1 \le n$
 is maximized\footnote{More precisely, if $\Sigma = \{a_1, \ldots, a_k\}$,
 we seek to maximize the function $f(x_1, \ldots, x_n) = \prod_{i=1}^k (x_i + 1)$
 due to the constraint $0 \le \sum_{i=1}^k x_i \le n - 1$, which happens for $x_1 = \ldots = x_k = \frac{n-1}{k}$
 with maximum value $\left(\frac{n-1}{k} + 1 \right)^k$.}
 if $\max\{ |u|_a : u \in L^{\operatorname{simple}}(\mathcal A) \}$ 
 equals $( n - 1 ) / |\Sigma|$ for every $a \in \Sigma$,
 which gives the claim.
 That the bound is sharp is shown
 in Remark~\ref{rem:lower_bound_shuffle_lang}.~\qed
\end{proof}

Note that for a single word $u \in \Sigma^*$,
we have $\perm(u) = \bigshuffle_{a \in \Sigma} \pi_a(u)$, i.e.,
the commutative closure is a strict shuffle language.
Hence, we get the next corollary from Theorem~\ref{thm:shuffle_lang_bound},
which is also sharp, as shown by Remark~\ref{rem:lower_bound_shuffle_lang}.
 
\begin{corollary}
\label{cor:shuffle_lang_bound}
 Let
 $
  L = \Sigma_0^* a_1 \Sigma_1^* a_2 \cdots a_m \Sigma_m^*.
 $
 Then, $\perm(L)$ is recognizable by a PDFA
 with at most $\lceil m/|\Sigma| + 1 \rceil^{|\Sigma|}$
 many states. In particular, the commutative closure
 of a single word $u$ could be recognized by a PDFA
 with at most $\left\lceil |u|/|\Sigma| + 1 \right\rceil^{|\Sigma|}$
 many states and this bound is sharp.
\end{corollary}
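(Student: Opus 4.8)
The plan is to realize $L$ by a small partially ordered NFA and then apply Theorem~\ref{thm:shuffle_lang_bound}. For $L = \Sigma_0^* a_1 \Sigma_1^* a_2 \cdots a_m \Sigma_m^*$ I would take the ``path'' automaton $\mathcal A$ with state set $\{0,1,\ldots,m\}$, initial state $0$, unique final state $m$, a transition from $i-1$ to $i$ labelled $a_i$ for each $i \in \{1,\ldots,m\}$, and a self-loop at state $i$ labelled $b$ for every $b \in \Sigma_i$. This automaton is partially ordered (all loops are self-loops), has $n = m+1$ states, and recognizes $L$. The point to check carefully is that $L^{\operatorname{simple}}(\mathcal A) = \{a_1 a_2 \cdots a_m\}$: a simple accepting path visits no state twice, hence uses no self-loop, so it must run straight through $0,1,\ldots,m$ and reads exactly $a_1 \cdots a_m$. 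Therefore $\perm(L^{\operatorname{simple}}(\mathcal A)) = \perm(a_1 \cdots a_m) = \bigshuffle_{a \in \Sigma} \pi_a(a_1 \cdots a_m)$ is a strict shuffle language, so Theorem~\ref{thm:shuffle_lang_bound} applies with $n = m+1$ and yields a PDFA for $\perm(L)$ with at most $\lceil (n-1)/|\Sigma| + 1 \rceil^{|\Sigma|} = \lceil m/|\Sigma| + 1 \rceil^{|\Sigma|}$ states. The statement about a single word $u$ is the special case $\Sigma_0 = \cdots = \Sigma_m = \emptyset$ with $u = a_1 \cdots a_m$, where $\Sigma_0^* a_1 \Sigma_1^* \cdots a_m \Sigma_m^* = \{u\}$ and $m = |u|$, giving the bound $\lceil |u|/|\Sigma| + 1 \rceil^{|\Sigma|}$.

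For sharpness (this is the reference to Remark~\ref{rem:lower_bound_shuffle_lang}) I would exhibit a matching family. Write $\Sigma = \{a_1,\ldots,a_k\}$, fix $t \ge 1$, and take $u = (a_1 a_2 \cdots a_k)^t$, so $|u| = kt$ and $\pi_{a_i}(u) = a_i^t$ for every $i$; hence $\perm(u) = \{ w \in \Sigma^* \mid |w|_{a_i} = t \text{ for all } i \}$. For $\vec s = (s_1,\ldots,s_k)$ with $0 \le s_i \le t$ set $w_{\vec s} = a_1^{s_1} \cdots a_k^{s_k}$ and $v_{\vec s} = a_1^{t-s_1} \cdots a_k^{t-s_k}$, so that $w_{\vec s} v_{\vec s} \in \perm(u)$. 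The $(t+1)^k$ words $w_{\vec s}$ are pairwise Nerode-inequivalent for $\perm(u)$: if $s_j \ne s_j'$, say (after swapping the two words if needed) $s_j < s_j'$, then $w_{\vec s} v_{\vec s} \in \perm(u)$ while $w_{\vec s'} v_{\vec s}$ has $a_j$-count $s_j' + (t - s_j) > t$ and hence lies outside $\perm(u)$. Moreover each $w_{\vec s}$ is a prefix of $w_{\vec s} v_{\vec s} \in \perm(u)$, so in any PDFA $\mathcal P$ for $\perm(u)$ the state $\delta(q_0, w_{\vec s})$ is defined, and by inequivalence the map $\vec s \mapsto \delta(q_0, w_{\vec s})$ is injective; thus $\mathcal P$ has at least $(t+1)^k$ states. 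Since $\lceil |u|/|\Sigma| + 1 \rceil^{|\Sigma|} = \lceil t+1 \rceil^k = (t+1)^k$, the bound of the corollary is attained, and because a single word is an instance of $\Sigma_0^* a_1 \cdots a_m \Sigma_m^*$, the general bound is sharp as well (for $m$ a multiple of $|\Sigma|$).

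I do not anticipate a genuine obstacle here: the corollary is essentially a specialization of Theorem~\ref{thm:shuffle_lang_bound} together with a routine Myhill--Nerode count. The two spots that demand a little care are (i) identifying $L^{\operatorname{simple}}(\mathcal A)$ correctly --- it is the single word $a_1 \cdots a_m$, not the language $L$ itself --- and (ii) in the lower bound, using that a PDFA carries no sink state, so that the $(t+1)^k$ live Nerode classes of $\perm((a_1 \cdots a_k)^t)$ translate into exactly $(t+1)^k$ states, matching the upper bound.
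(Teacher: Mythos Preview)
Your proposal is correct and follows essentially the same route as the paper: build the $(m+1)$-state path automaton for $L$, observe that $L^{\operatorname{simple}}(\mathcal A)=\{a_1\cdots a_m\}$ so that $\perm(L^{\operatorname{simple}}(\mathcal A))$ is a strict shuffle language, and invoke Theorem~\ref{thm:shuffle_lang_bound} with $n=m+1$; for sharpness, your Nerode count on $\perm((a_1\cdots a_k)^t)$ is exactly the content of Remark~\ref{rem:lower_bound_shuffle_lang} (up to the harmless reordering $(a_1\cdots a_k)^t$ versus $a_1^t\cdots a_k^t$). The only extra remark in the paper is that this path automaton already has the fewest possible states, which is not needed for the corollary itself.
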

\begin{proof}
 The NFA
 $\mathcal A$ with state set $Q = \{ q_0, q_1, \ldots, q_m \}$,
 transition function $\delta(q_i, a) = \{ q_i : a \in \Sigma_i \} \cup \{ q_{i+1} : i < m \mbox{ and } a = a_{i+1} \}$
 for $i \in \{0,\ldots,m\}$ and $a \in \Sigma$, start state $q_0$ and final state set $\{q_m\}$ recognizes $L$.
 We have $L^{\operatorname{simple}}(\mathcal A) = \{ a_1 a_2 \cdots a_m \}$
 and $\perm(L^{\operatorname{simple}}(\mathcal A))$ is a strict shuffle language.
 Note that, by Lemma~\ref{lem:relation_sc_simple} and Lemma~\ref{lem:sc_lower_bound},
 $\mathcal A$ has the least possible number of states.
 Then, Theorem~\ref{thm:shuffle_lang_bound}
 gives the claim and the bound is sharp by Remark~\ref{rem:lower_bound_shuffle_lang}.~\qed
\end{proof}

\begin{remark}\label{rem:lower_bound_shuffle_lang}
 Suppose $\Sigma = \{a_1, \ldots, a_k\}$.
 Let $m > 0$ and $u = a_1^m \cdots a_k^m$.
 Then, any PDFA recognizing $\perm(u)$ needs at least $(m+1)^k$ many states.
 For let $0 \le m_i, n_i \le m$, $i \in \{1,\ldots, k\}$,
 such that there exists $j \in \{1,\ldots,k\}$
 with $m_j < n_j$.
 Then, choose $r_i$ for each $i \in \{1,\ldots,k\}$
 such that $n_i + r_i = m$.
 Set $w = a_1^{n_1} \cdots a_k^{n_k} a_1^{r_1}\cdots a_k^{r_k}$.
 As $|w|_{a_i} = m$ for any $i \in \{1,\ldots,k\}$,
 we find $w \in \perm(u)$.
 However, for $w' = a_1^{m_1} \cdots a_k^{m_k} a_1^{r_1}\cdots a_k^{r_k}$
 we have $|w'|_{a_j} < m$,
 so that $w'\notin \perm(u)$.
 So, $w$ and $w'$ represent different Nerode right-congruence classes~\cite{HopUll79}
 for the language $\perm(u)$, which yields the lower bound 
 for the number of states of any recognizing automaton.

 As $u$ is recognizable by a minimal NFA $\mathcal A$
 with $k\cdot m + 1$
 many states, $|u|_{a_i} = m$ for any $i \in \{1,\ldots,k\}$
 and $L^{\operatorname{simple}}(\mathcal A) = L(\mathcal A)$,
 as $L(\mathcal A)$ is finite,
 the bounds of Theorem~\ref{thm:apc_bound},
 Theorem~\ref{thm:shuffle_lang_bound}
 and of Corollary~\ref{cor:shuffle_lang_bound}
 are all meet by this example.
\end{remark}

\section{State Complexity 
on General Chain Automata} 

A \emph{general chain automaton} $\mathcal A = (\Sigma, Q, \delta, q_0, F)$
is a NFA such that the state set is totally
ordered, i.e., we can assume $Q = \{ 0, \ldots, n-1 \}$
with the usual order and $q_0 = 0$ and $F = \{n-1\}$ and, for any
$q \in Q \setminus\{n-1\}$ and $a \in \Sigma$,
we have $\delta(q, a) \subseteq \{q, q+1\}$.
If $\mathcal A$ is a general chain automaton, then 
$L^{\operatorname{simple}}(\mathcal A) \subseteq \Sigma^{n-1}$.

These automata, with no self-loops allowed\footnote{This is no restriction when we have no self-loops.}, 
were introduced in~\cite{DBLP:journals/tcs/ChoGHKPS17} under the name chain automata.
The sharp bound we will give
is essentially an adaption of the bound
derived in~\cite{DBLP:journals/tcs/ChoGHKPS17}. Note that we only have
a result for binary alphabets.

\begin{propositionrep}
\label{prop:bound_chain_automata}
 Let $\Sigma = \{a,b\}$ and $\mathcal A$ be a general chain automaton with $n$ states.
 Then, $\perm(L(\mathcal A))$
 is recognizable by a PDFA with at most $\frac{n^2 + n + 1}{3}$
 many states and this bound is sharp even on finite languages. 
\end{propositionrep}
\begin{proofsketch}
Let the set of states of $\mathcal A$ be $\{0, \ldots, n-1\}$,
 where $0$ is the start state and $n-1$ is the only final state.
 Set $\Gamma = \{ x \in \Sigma \mid \exists q \in Q : q \in \delta(q, x) \}$,
 the symbols which label self-loops.
 Note that $L(\mathcal A)$ is finite if and only if $\Gamma = \emptyset$.
 For $0 \le h \le n-2$, the transitions only go
 from $h$ to $h+1$ or we have a self-loop from $h$ to $h$.
 We have three possibilities 
 for outgoing transitions from a state $0 \le h \le n - 2$
 that are not self-loops:
 \begin{enumerate}
 \item $\{ h + 1 \} \subseteq \delta(h,a)$ and $\delta(h, b) \cap \{h + 1\} = \emptyset$  ($a$-transition);
 
 \item $\{ h + 1\} \subseteq \delta(h,b)$ and $\delta(h, a) \cap \{h+1\} = \emptyset$ ($b$-transition); 
 
 \item $\{ h + 1 \} \subseteq \delta(h,a) \cap \delta(h,b)$ ($a \& b$-transition).
 \end{enumerate}

 The order of the different types of transitions ($a$, $b$, or $a \& b$)
 of $\mathcal A$ does not affect the language $\perm(L(\mathcal A))$.
 A similar reasoning applies to the self-loops.
 Hence, without loss of generality, we can assume that $\mathcal A$
 has first a (possibly empty) sequence of $a$-transitions,
 followed by a (possibly empty) sequence of $b$-transitions,
 followed by a (possibly empty) sequence of $a\&b$-transitions
 and only self-loops with labels from the (possibly empty)
 subset $\Gamma \subseteq \Sigma$ at the final state.
 Thus, we can assume that $L(\mathcal A) = a^ib^j(a+b)^k\Gamma^*$
 for some non-negative integers $i,j,k$
 such that $i + j + k = n - 1$.
 By modifiying a construction from~\cite{DBLP:journals/tcs/ChoGHKPS17},
 we can construct a PDFA
 for $\perm(L(\mathcal A)$
 with $f(i,j,k) = (i+1)\cdot (j+1) + k\cdot j + k\cdot i + k
 $
 many states.
 In order to get an upper bound for the state complexity of $\perm(L(\mathcal A))$
 as a function of the size of $\mathcal A$,
 we determine for which values of $i,j,k$, where
 $i + j + k = n - 1$, the function $f(i,j,k)$ has a maximal value.
 The function $f$ is maximized if $ij + kj + ki$ is maximal,
 thus if $i = j = k = \frac{n-1}{3}$. More generally,
 \[
  \max_{i+j+k=n-1} f(i,j,k) = \left\{ 
   \begin{array}{ll}
  \frac{n^2 + n + 1}{3} & \mbox{if } n \equiv 1 \pmod{3}; \\
  \frac{n^2 + n}{3}     & \mbox{otherwise.}
  \end{array}
  \right.
 \]
 In~\cite[Lemma 4.2]{DBLP:journals/tcs/ChoGHKPS17}, as every chain automaton
 is a general chain automaton recognizing a finite language,
 it was shown that for $n \equiv 1 \pmod{3}$ 
 there exists a language recognized by a chain automaton
 with $n$ states such that any automaton
 for the commutative closure needs at least $\frac{n^2 + n + 1}{3}$
 many states.~\qed
\end{proofsketch}
\begin{proof}
 Let the set of states of $\mathcal A$ be $\{0, \ldots, n-1\}$,
 where $0$ is the start state and $n-1$ is the only final state.
 Set $\Gamma = \{ x \in \Sigma \mid \exists q \in Q : q \in \delta(q, x) \}$,
 the symbols which label self-loops.
 Note that $L(\mathcal A)$ is finite if and only if $\Gamma = \emptyset$.
 For $0 \le h \le n-2$, the transitions only go
 from $h$ to $h+1$ or we have a self-loop from $h$ to $h$.
 We have three possibilities 
 for outgoing transitions from a state $0 \le h \le n - 2$
 that are not self-loops:
 \begin{enumerate}
 \item $\{ h + 1 \} \subseteq \delta(h,a)$ and $\delta(h, b) \cap \{h + 1\} = \emptyset$  ($a$-transition);
 
 \item $\{ h + 1\} \subseteq \delta(h,b)$ and $\delta(h, a) \cap \{h+1\} = \emptyset$ ($b$-transition); 
 
 \item $\{ h + 1 \} \subseteq \delta(h,a) \cap \delta(h,b)$ ($a \& b$-transition).
 \end{enumerate}

 The order of the different types of transitions ($a$, $b$, or $a \& b$)
 of $\mathcal A$ does not affect the language $\perm(L(\mathcal A))$.
 A similar reasoning applies to the self-loops.
 Hence, without loss of generality, we can assume that $\mathcal A$
 has first a (possibly empty) sequence of $a$-transitions,
 followed by a (possibly empty) sequence of $b$-transitions,
 followed by a (possibly empty) sequence of $a\&b$-transitions
 and only self-loops with labels from the (possibly empty)
 subset $\Gamma \subseteq \Sigma$ at the final state.
 Thus, we can assume that $L(\mathcal A) = a^ib^j(a+b)^k\Gamma^*$
 for some non-negative integers $i,j,k$
 such that $i + j + k = n - 1$.
 Now, the language $\perm(L(\mathcal A))$ is recognized by the PDFA
 $\mathcal B_{i,j,k}
  = (\{a,b\}, \gamma, q_0, F_B)$ where
 \begin{multline*}
  Q = \{ (r,s) \mid 0 \le r \le i + k, 0 \le s < j \} \cup \\ 
      \{ (r,s) \mid 0 \le s \le j + k , 0 \le r < i \} \cup \{z_0,z_1,\ldots,z_k\},
 \end{multline*}
 $F_B = \{ z_k \}$, $q_0 = (0,0)$ and the transitions are defined
 by setting\footnote{The automaton is essentially the
 same as given in~\cite{DBLP:journals/tcs/ChoGHKPS17}
 for chain automata except for adding certain self-loops
 for symbols in $\Gamma$. The proof is also very similar.}, 
 for $(r, s) \in Q$,
 \begin{align*}
     \gamma((r,s), a) & = \left\{
     \begin{array}{ll}
     (r+1,s) & \mbox{if } r < i - 2 \mbox{ or } (r < i + k \mbox{ and } s < j); \\
     z_{s-j} & \mbox{if } r = i - 1 \mbox{ and } s \ge j; \\
     (r,s)   & \mbox{if } r + 1 > i + k \mbox{ and } a \in \Gamma; \\
     \mbox{undefined} & \mbox{if } r + 1 > i + k \mbox{ and } a \notin \Gamma;
     \end{array}
     \right. \\
     \gamma((r,s),b) & = \left\{
     \begin{array}{ll}
     (r, s+ 1) & \mbox{if } s < j - 2 \mbox{ or } (s < j + k \mbox{ and } r < i); \\
     z_{r-i}   & \mbox{if } s = j - 1 \mbox{ and } r \ge i; \\
     (r,s)     & \mbox{if } s + 1 > j + k \mbox{ and } b \in \Gamma; \\
     \mbox{undefined} & \mbox{if } s + 1 > j + k \mbox{ and } b \notin \Gamma; 
     \end{array}
     \right.
 \end{align*}
 and, for $x \in \Sigma$, 
 \[
  \gamma(z_l, x) = \left\{ 
  \begin{array}{ll}
   z_{l+1} & \mbox{if } 0 \le l < k; \\
   z_k     & \mbox{if } l = k \mbox{ and } x \in \Gamma; \\
   \mbox{undefined} & \mbox{if } l = k \mbox{ and } x \notin \Gamma.
  \end{array}
  \right.
 \]
 A computation of $\mathcal B_{i,j,k}$ reaches a state
 of the form $(r,s)$ after encountering
 $r$ occurrences of $a$ if $r < i + k$ and $s < j$
 or at least $i + k$ occurrences of $a$ if $r = i + k$ and $s < j$
 and $s$ occurrences
 of $b$ if $s < j + k$ and $r < i$
 or at least $j + k$ occurrences of $b$ if $s = j + k$ and $r < i$.
 A state $z_l$, $0 \le l \le k$, is reached after encountering
 at least $i$ occurrences of $a$ and at least
 $j$ occurrences of $b$, where the input processed thus far has length
 at least $i + j + k$.
 Thus, $\mathcal B_{i,j,k}$ reaches the accepting state
 $z_k$ exactly on input of length at least $i + j + k$
 that have at least $i$ occurrences of $a$
 and at least $j$ occurrences of $b$.

 The cardinality of $Q$ is
 \[
  f(i,j,k) = (i+1)\cdot (j+1) + k\cdot j + k\cdot i + k
 \]
 (taking into account that the first two sets in the union defining $Q$
 have elements in common).
 In order to get an upper bound for the state complexity of $\perm(L(\mathcal A))$
 as a function of the size of $\mathcal A$,
 we determine for which values of $i,j,k$, where
 $i + j + k = n - 1$, the function $f(i,j,k)$ has a maximal value.
 The function $f$ is maximized if $ij + kj + ki$ is maximal,
 thus if $i = j = k = \frac{n-1}{3}$. More generally,
 \[
  \max_{i+j+k=n-1} f(i,j,k) = \left\{ 
   \begin{array}{ll}
  \frac{n^2 + n + 1}{3} & \mbox{if } n \equiv 1 \pmod{3}; \\
  \frac{n^2 + n}{3}     & \mbox{otherwise.}
  \end{array}
  \right.
 \]
 In~\cite[Lemma 4.2]{DBLP:journals/tcs/ChoGHKPS17}, as every chain automaton
 is a general chain automaton recognizing a finite language,
 it was shown that for $n \equiv 1 \pmod{3}$ 
 there exists a language recognized by a chain automaton
 with $n$ states such that any automaton
 for the commutative closure needs at least $\frac{n^2 + n + 1}{3}$
 many states.~\qed
\end{proof}

\section{Complexity Results}
\label{sec:complexity}

Here, we consider the alphabet to be fixed in advance and not part of the input.

In model checking, when the specification and the implementation could be represented by finite automata,
the inclusion problem arises naturally~\cite{DBLP:conf/concur/AbdullaJNS04,DBLP:conf/banff/Vardi95}.
In this problem, we are given two automata and ask if the recognized language of the first is contained
in the recognized language of the second automaton.
In~\cite{DBLP:journals/iandc/BouajjaniMT07} it was shown that the universality problem, i.e., deciding if a given APC\footnote{Or a partially ordered NFA, which follows by Lemma~\ref{lem:NFA_to_APC_P}.}
denotes $\Sigma^*$, is $\PSPACE$-complete, \emph{even for fixed binary alphabets}. This implies $\PSPACE$-completeness
of the inclusion problem.

Here, we show the somewhat surprising result that the above decision problems
are polynomial time solvable modulo permutational equivalence, i.e., if we ask
the same questions for the commutative closure of the input languages,
see Theorem~\ref{thm:inclusion_perm} and Corollary~\ref{cor:universality_perm}.

This result is not as artificial as it might seem. For example, consider
the introductory example from regular model checking in Section~\ref{sec:introduction}.
Here, the set of reachable configurations $0^*10^*$
is closed under the commutative closure,
as well as the set of bad configurations $(0+1)^*1(0+1)^*1(0+1)^*$
and its complement.
More specifically, these sets are commutative languages and the original decision problem 
is equivalent to the same decision problem modulo permutational equivalence.

At the heart of this result lies the fact that the PDFA constructed in the proof
of Theorem~\ref{thm:apc_bound} could be constructed, for a fixed alphabet, in polynomial time.
This will be shown in Proposition~\ref{prop:NFA_to_PDFA_for_perm_in_P}.
But before this result, let us first state that, with respect
to polynomial time, it makes no difference if the input is given as an APC expression
or a partially ordered NFA.

\begin{lemmarep} 
\label{lem:NFA_to_APC_P}
 For a given partially ordered NFA $\mathcal A$
 an APC expression of $L(\mathcal A)$
 could be computed in $\PTIME$ and for every APC expression
 a partially ordered NFA is computable in $\PTIME$.
 This result also holds for variable input alphabets.
\end{lemmarep}
\begin{proof}
 Kleene's algorithm~\cite{HopUll79} for the conversion of any NFA\footnote{In~\cite{HopUll79} it is only described for deterministic automata, but it actually works for NFAs the same way.}
 to a regular expression could
 be used, arguing that it yields an APC when applied to a partially ordered NFA.
 However, we also give a more direct approach.
 Let $\mathcal A = (\Sigma, Q, \delta, q_0, F)$.
 As in the proof of Proposition~\ref{prop:NFA_to_PDFA_for_perm_in_P},
 we can assume $q_0$ is a minimal state with respect to the partial order
 of the automaton.
 Compute a topological ordering $Q = \{q_0, q_1, \ldots, q_n\}$.
 For $i \in \{0,\ldots,n\}$, let $L_i = L((\Sigma, Q, \delta, q_i, F))$
 be the language when $\mathcal A$ is started at state $q_i$.
 For any other $i \in \{0, \ldots, n\}$
 set $\Sigma_i = \{ a \in \Sigma \mid q_i \in \delta(q_i, a) \}$.
 We have
 \[ 
  L_i = \bigcup_{\substack{a \in \Sigma, q_j \in Q \\ q_j \in \delta(q_i, a) \setminus \{q_i \}}} 
   \Sigma_i^* \cdot \{ a \} \cdot L_j.
 \]
 This equation is actually valid for any NFA.
 However, as $\mathcal A$ is partially ordered with $q_0 < q_1 < \ldots < q_n$,
 in the above equation we have, for $q_j \in \delta(q_i, a) \setminus \{q_i\}$
 with $q_j \in Q$ and $a \in \Sigma$, that $i < j$.
 So, the index works as an induction parameter, where the base cases
 are the maximal indices respectively states.
 For the maximal states $q_j$ we must have $L_j = \Sigma_j^*$,
 in particular $L_n = \Sigma_n^*$.
 Hence, we can assume inductively that the $L_j$
 are given by an APC, which yields the claim for $L_i$
 and ultimately for $L_0 = L(\mathcal A)$ (recall the characterization of APCs, mentioned in Section~\ref{sec:preliminaries},
 as the closure of $\Gamma^*$, $\Gamma \subseteq \Sigma$, and $\{a\}$, $a \in \Sigma$, under concatenation
 and finite union).
 It is clear that the above recursive computation could be performed in polynomial time
 (in fact, the topological ordering does not need to be computed and was only
 used for notational convenience).
 Also, note that by omitting $\Sigma_i^*$ 
 in the above equation, the language $L^{\operatorname{simple}}(\mathcal A)$
 could be computed by the resulting recursion.
 
 Conversely, suppose we have an APC expression $L$ over $\Sigma^*$.
 Now, a partially ordered NFA
 for an expression of the form $\Sigma_0^* a_1 \Sigma_1^* \cdots a_n \Sigma_n^*$
 is a single path labelled by $a_0 \cdots a_n$
 and with self-loops labelled by the symbols in $\Sigma_i$ at the $i$-th state in this path.
 A finite union of such languages corresponds to an NFA
 that branches in the initial state to a path corresponding
 to each part of the form $\Sigma_0^* a_1 \Sigma_1^* \cdots a_n \Sigma_n^*$
 in the union.

 Lastly, we see that the presented algorithms also run in polynomial time when
 the alphabet is allowed to be part of the input.~\qed
\end{proof}

So, we are ready to derive that from a given partially ordered NFA,
a PDFA recognizing the commutative closure could be computed in $\PTIME$.

\begin{propositionrep}
\label{prop:NFA_to_PDFA_for_perm_in_P}
 Given a partially ordered NFA $\mathcal A$ with $n$ states,
 the recognizing PDFA for $\perm(L(\mathcal A))$
 from Theorem~\ref{thm:apc_bound}
 could be constructed in polynomial time for a fixed alphabet.
 More precisely in time $O(n^{|\Sigma|+2})$.
\end{propositionrep}
\begin{proofsketch} This is only a rough and intuitive outline of the procedure.

 Let $\Sigma = \{a_1, \ldots, a_k\}$ and $\mathcal A = (\Sigma, S, \mu, s_0, E)$
 be a partially ordered NFA. We outline a polynomial time
 algorithm to compute $\mathcal B = (\Sigma, Q, \delta, q_0, F)$ as
 defined in the proof of Theorem~\ref{thm:apc_bound}.
 We can assume that $s_0$ is minimal for the partial order of $\mathcal A$
 and every maximal state is final.
 Set $n_a = \max\{ |u|_a : u \in L^{\operatorname{simple}}(\mathcal A) \}$
 for $a \in \Sigma$

 The state set, and hence the numbers $n_a$, could be computed by a dynamic programming scheme starting
 at the maximal final states and ending at the start state. For each letter $a \in \Sigma$,
 we store at every state $q$ the number $\max\{ |u|_a \mid \delta(q, u) \in F \mbox{ and no loops are entered by $u$ in $\mathcal A$} \}$, i.e.,
 the longest unary projection string for that letter when starting at this state,
 ending at a final state and traversing no self-loops\footnote{So, essentially we are working
 in the automaton that results if we delete all self-loops, which gives a recognizing
 automaton for $L^{\operatorname{simple}}(\mathcal A)$ for partially ordered NFAs $\mathcal A$.}.
 For a final maximal state, those numbers are initialized to zero
 and for every other state, they are computable from the predecessor states.
 For the start state, the last state in this procedure, these are precisely the numbers
 $n_a$, from which $Q$
 is easily constructible.
 
 The computation of the transition function and the final state set is more involved.
 Note that for states $(s_1, \ldots, s_k) \in Q$
 with $s_i < n_{a_i}$ for $i \in \{1,\ldots,k\}$
 the transition function is easily computable. The only difficulty is to determine which
 ``boundary'' states should be labeled by self-loops.
 We do this by constructing an auxiliary automaton~$\mathcal A'$ out of~$\mathcal A$ by ``unfolding'' the self-loops into paths of length $|S| + 1$.
 The automaton~$\mathcal A'$ then has no loops anymore.
 Now, we label the states of this auxiliary automaton
 with those states from $Q$ that are reachable
 in~$\mathcal B$ by words that go from the start state to the state
 under consideration of~$\mathcal A'$.
 If such a word passes an unfolded path completely, then, as they
 are sufficiently long, we know that it must traverse a self-loop
 in $\mathcal A$ labeled by the same letter~$a$ as the unfolded path. In this case, for every ``boundary'' state of $Q$ in the labeling
 of the target state of the word in~$\mathcal A'$
 we add a self-loop for the letter $a$ to that state from $Q$ in $\mathcal B$.
 
 Finally, a state from $Q$
 is declared to be final if and only if it appears in a label of a final state of $\mathcal A'$.

 This procedure indeed computes $\mathcal B$ and could be made to run
 in the stated time bound.~\qed
\end{proofsketch}
\begin{proof}
 Let $\Sigma = \{a_1, \ldots, a_k\}$
 and $\mathcal A = (\Sigma, S, \mu, s_0, E)$ be the input NFA.
 We refer to the proof of Theorem~\ref{thm:apc_bound}
 for the definition of $\mathcal B = (\Sigma, Q, \delta, q_0, F)$ and further notation
 and show that the defining parameters
 of the automaton $\mathcal B = (\Sigma, Q, \delta, q_0, F)$
 could be computed in polynomial time by giving 
 algorithms running in polyomial time.
 
 \begin{enumerate}
 \item The state set $Q$:
  We use that the states are partially ordered by the reachability relation.
  If a state is maximal for this order and not final, we can ignore this
  state without altering the recognized language.
  So, we assume that every maximal state is final.
  Then, for every state we compute a vector $(i_1,\ldots, i_k)$
  that stores for any $j \in \{1,\ldots,k\}$
  in the entry $i_j$ the maximum of the number of times the letter $a_j$ appears 
  in the words
  that go from this state to a maximal final state
  without repeating a state, i.e., not entering a loop.
  We compute these vectors according to the following rules:
  \begin{enumerate}
  \item For any maximal final state, set it equal to $(0,\ldots,0)$.
  
  \item For any $p \in Q$ that is not maximal compute $(i_1, \ldots, i_k)$
   according to the following scheme.
   Inductively, we can assume that the vectors for states strictly larger
   in the partial order are already computed.
   We use the vectors at the direct successors (i.e. those reachable by a single letter)
   to compute the vector for the state under consideration.
   For every state $q \in Q$ reachable by at least one letter from $p$,
   let $\Sigma_q \subseteq \Sigma$ be the set of letters from which we can reach it,
   i.e., $\Sigma_q = \{ a \in \Sigma \mid \delta(p, a) = q \}$.
   By choice of $q$, we have $\Sigma_q \ne \emptyset$.
   Let $(i_1, \ldots, i_k)$ be the vector corresponding to $q$.
   Then, let $(i_1', \ldots, i_k')$
   be the vector defined by
   \[
    i_j' = \left\{ 
    \begin{array}{ll}
     i_j + 1 & \mbox{if } a_j \in \Sigma_q; \\
     i_j     & \mbox{otherwise.}
    \end{array}
    \right.
   \]
   Let $S_p \subseteq \mathbb N_0^k$ be the set of all these
   vectors for each state $q \in Q$ as above.
   Then, the vector of the state $p \in Q$ is
   \[
    (m_1, \ldots, m_k)
   \]
   where $m_j = \max\{ i_j \mid (i_1, \ldots, i_k) \in S_p \}$ for $j \in \{1,\ldots,k\}$.

   Finally, for the state $s_0$, which could be assumed to be minimal, for otherwise
   all predecessor states could be ignored without altering the recognized language,
   we have $m_j = \max\{ |u|_{a_j} : u \in L^{\operatorname{simple}(\mathcal A)}\}$.
   Recall that in the proof of Theorem~\ref{thm:apc_bound},
   these number were denoted by $n_j$, i.e., the vector $(m_1, \ldots, m_k)$
   for the state $s_0$ is precisely the vector $(n_1, \ldots, n_k)$
   introduced in the proof of Theorem~\ref{thm:apc_bound}.
   Having these numbers, the state set $Q$ could be easily constructed,
   as it is only a cartesian product $[n_1 + 1] \times\ldots\times [n_k+1]$.
   
   \medskip

   Note that, in this computation, possible final states that are not maximal
   are handled like ordinary non-maximal states in the second step above.
   
  \end{enumerate}
  
 \item The transition function and the final state set $F$.    

  %
  
  %
  %
  
  Recall from the proof of Theorem~\ref{thm:apc_bound}
  that $Q = [n_1 + 1] \times \ldots \times [n_k + 1]$
  with $n_j = \max\{ |u|_{a_j} \mid u \in L^{\operatorname{simple}}(\mathcal A) \} + 1$.
  The transition function among states $(s_1, \ldots, s_k) \in Q$
  with $s_j < n_j$, $j \in \{1,\ldots,k\}$, when an $a_j \in \Sigma$
  is read is easily computable -- simply increase $s_j$ by one.
  Hence, the only difficulty is to determine if a self-loop should be
  added in case $s_j = n_j$.
  Intuitively, we ``unfold'' the self-loops in $\mathcal A$
  into long enough paths of length $|S|$,
  such that we can detect if an input word has to pass a self-loop, i.e.,
  we read along these ``unfolded'' paths which after a certain
  point, as they have length $|S|$, enforce a loop in $\mathcal A$. 
  We do this by maintaining for each state of the unfolded automaton
  a list of states of $\mathcal B$ that are reachable by words that go into
  this state of the unfolded automaton.

  As before, we can assume that from every state of $\mathcal A$,
  a final state is reachable and that $s_0$ is the minimal element
  in the partial order induced by the reachability relation.
  
  First, we construct an automaton $\mathcal A' = (\Sigma, S', \mu', s_0', E')$
  by ``unfolding'' the self-loops. The result $\mathcal A'$ will have no loops.
  For a state $s \in S$, set $\Sigma_s = \{ a \in \Sigma \mid s \in \mu(s, a) \}$.
  If $\Sigma_s \ne \emptyset$, do the following: 
  (1) remove the self-loops, 
  %
  (2) set $s_1 = s$ and add the new states $\{ s_2, \ldots, s_{|S|+1} \}$
   and the following transitions
  \begin{multline*}
   \{ (s_{i-1}, a, s_i) \mid a \in \Sigma_s, i \in \{2,\ldots,|S|+1\} \} \cup \\
   \{ (s_i, b, t) \mid b \in \Sigma, t \in \mu(s, b) \setminus\{s\}, i \in \{2,\ldots, |S|+1\} \}.
  \end{multline*}
  For every such state $s$, we add $|S|$ new states,
  hence $|S'| \in O(|S|^2)$.
  Also, set $E' = E \cup \{ s_i \mid s \in E, i \in \{2,\ldots,|S|+1\} \}$
  and $s_0' = s_0$.

  In the following computation, we will only refer to the automaton $\mathcal A'$.
  First, topologically sort the states of $\mathcal A'$. 
  Then, for any state $s \in S'$, we compute a subset $T_s \subseteq Q$ of states $Q$ from $\mathcal B$.
  The states in $T_s$ are precisely those states of $\mathcal B$ that 
  are reachable by words that go from the start state to $s$ in $\mathcal A'$.
  Note that, by construction of $\mathcal A'$, this is only a finite set of words.
  Also note that, as a subset is associated with every state, we
  will construct at most $|S'|$ many subsets during the procedure, i.e., 
  we have no blow-up here and the final algorithm will run in polynomial time.

  Initially, to every state in $S'$ the empty set is assigned.
  Then, proceed in the following way:
  \begin{enumerate}
  \item Begin with the start state $s_0'$ and associate the subset $T_{s_0} = \{ (0,\ldots, 0) \}$.
  
  \item For the current state $s \in S'$ under consideration, do the following.   
     For any direct successor state of $s$, i.e., those reachable
  by a single symbol, if $a_j$, $j \in \{1,\ldots,k\}$
  is the letter leading to this successor state, update
  the assigned subset of this successors state by adding
  to it the subset from the current state, but with every
  vector inside increased by the vector that has zero everywhere but
  at the $j$-th position, where the entry is one, to it, except for the states
  inside this set that equal $n_j$ at the $j$-th position.
  In this case, do nothing and add a self-loop labelled
  by $a_j$ to the state in $Q$ under consideration (here, that
  we can reach a final state is important, so that the word read up so far
  is a prefix of some word in $L$, see the definition of
  the transition function in the proof of Theorem~\ref{thm:apc_bound}
  for comparison, as this is a defining condition to add self-loops).
  More precisely, if $t \in S'$ is a successor state to $s \in S'$ for the letter $a_j$, i.e.,
  $t = \delta(s, a_j)$, then,
  add to $T_t$ the sets
  \begin{multline*}
    \{ (i_1, \ldots, i_{j-1}, i_j + 1, i_{j+1}, \ldots, i_k) \mid \\ 
     (i_1, \ldots, i_{j-1}, i_j, i_{j+1}, \ldots, i_k)  \in T_s \mbox{ and } i_j < n_j \} 
  \end{multline*}
  and
  \begin{multline*}
   \{ (i_1, \ldots, i_{j-1}, n_j, i_{j+1}, \ldots, i_k) \mid 
     (i_1, \ldots, i_{j-1}, n_j, i_{j+1}, \ldots, i_k)  \in T_s \}.
  \end{multline*}
  Also, in case the following holds true
  \[
  \{ (i_1, \ldots, i_{j-1}, n_j, i_{j+1}, \ldots, i_k) \mid 
     (i_1, \ldots, i_{j-1}, n_j, i_{j+1}, \ldots, i_k)  \in T_s \} \ne \emptyset,
  \]
  add the self-loop 
  \[
  \mu( (i_1, \ldots, i_{j-1}, n_j, i_{j+1}, \ldots, i_k) , a_j ) = (i_1, \ldots, i_{j-1}, n_j, i_{j+1}, \ldots, i_k)
  \]
  to $\mathcal B$.

  \item After we have done the previous step for every direct successor state, move on to the
  next state in the topological order.
  \end{enumerate}

  In total, in the above procedure, only a polynomial number of steps
  are performed.
  So, also the subsets involved in the computation
  do not grow too much.

  Finally, for every final state, collect the associated subsets of $Q$
  and these states are the final states of $\mathcal B$.
  More precisely, $(i_1, \ldots, i_k) \in F$ if and only if
  there exists $s \in E'$ such that
  $
   (i_1, \ldots, i_k) \in T_s.
  $

 \end{enumerate}
 We have $|Q| \le n^{|\Sigma|}$, and in the above procedure of unfolding and traversing
 the at most $n^2$ states of $\mathcal A$, we have to perform, for each letter
 at most $|Q|$ many operations, as this is the maximal size of the associated sets.
 So, we have given a polynomial time algorithm to compute $\mathcal B$ in time $O(|\Sigma||Q|^{|\Sigma|+2}) = O(|Q|^{|\Sigma|+2})$.~\qed
\end{proof}


 With Proposition~\ref{prop:NFA_to_PDFA_for_perm_in_P}, 
 we derive that, given two APCs, the inclusion 
 problem modulo permutational equivalence
 is solvable in polynomial time.

\begin{theoremrep}\label{thm:inclusion_perm}
 Fix an alphabet $\Sigma$. Then, the following problem is in $\PTIME$:
 \\
 \emph{Input:} Two APC expressions $L_1, L_2$ over $\Sigma^*$. \\ 
 \emph{Question:} Is $\perm(L_1) \subseteq \perm(L_2)$?
\end{theoremrep}
\begin{proof}
 First, we construct two NFAs $\mathcal A_1$ and $\mathcal A_2$
 for $L_1$ and $L_2$, which could be done in $\PTIME$ by Lemma~\ref{lem:NFA_to_APC_P}.
 Then, we compute PDFAs $\mathcal B_1$
 and $\mathcal B_2$ for their respective commutative closures,
 which could be done in $\PTIME$ by Proposition~\ref{prop:NFA_to_PDFA_for_perm_in_P}.
 Now, for deterministic automata 
 the inclusion probelm is solvable in $\PTIME$.
 More precisely, we have:
 $
  L(\mathcal B_1) \subseteq L(\mathcal B_2)
  \Leftrightarrow L(\mathcal B_1) \cap \overline{L(\mathcal B_2)} = \emptyset.
 $
 Then, on deterministic automata the Boolean operations are performable in polynomial
 time with the product automaton construction~\cite{HopUll79}
 and switching of final and non-final states. 
 Also, the non-emptiness problem, i.e., deciding if the recognized language
 is non-empty, for automata (even NFAs) is $\NL$-complete~\cite{DBLP:journals/iandc/HolzerK11},
 hence also in $\PTIME$. So, we can perform the above emptiness check in~$\PTIME$.~\qed
\end{proof}

 Given an APC, the universality problem modulo permutational equivalence
 is solvable in polynomial time, as it is reducible
 to the corresponding inclusion problem up to permutational equivalence.
 
\begin{corollary}\label{cor:universality_perm}
 Fix an alphabet $\Sigma$. Then, the following problem is in $\PTIME$:
 \\
 \emph{Input:} An APC expression $L$ over $\Sigma^*$. \\ 
 \emph{Question:} Is $\perm(L) = \Sigma^*$?
\end{corollary}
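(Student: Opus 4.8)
The plan is to reduce the universality problem up to permutational equivalence directly to the inclusion problem up to permutational equivalence of Theorem~\ref{thm:inclusion_perm}. First I would observe that, for any language $L \subseteq \Sigma^*$, we trivially have $\perm(L) \subseteq \Sigma^*$, so that $\perm(L) = \Sigma^*$ holds if and only if $\Sigma^* \subseteq \perm(L)$. Since $\Sigma^*$ is commutative, $\Sigma^* = \perm(\Sigma^*)$, and hence the condition is equivalent to $\perm(\Sigma^*) \subseteq \perm(L)$.

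Next I would note that $\Sigma^*$ is itself an APC expression: it is a marked product with no marked letters (the case $k = 0$ with $\Sigma_0 = \Sigma$), equivalently a single subexpression $\Gamma^*$ with $\Gamma = \Sigma$, which is one of the base building blocks of APCs. For a fixed alphabet this expression has constant size. Therefore, setting $L_1 = \Sigma^*$ and $L_2 = L$, the question ``is $\perm(L) = \Sigma^*$?'' is precisely an instance of the problem ``is $\perm(L_1) \subseteq \perm(L_2)$?'' treated in Theorem~\ref{thm:inclusion_perm}, and invoking that theorem yields membership in $\PTIME$.

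Since this is a trivial, constant-size many-one reduction, I do not expect any real obstacle: all of the work has already been carried out in Proposition~\ref{prop:NFA_to_PDFA_for_perm_in_P} (constructing, in polynomial time for a fixed alphabet, a PDFA for a commutative closure) and in Theorem~\ref{thm:inclusion_perm} (performing the inclusion test on the two resulting PDFAs via a product construction and an emptiness check). The only minor bookkeeping point to verify is that $\Sigma^*$ is a legitimate APC expression in the formalism used here, which follows immediately from the characterization of APCs as the closure of the languages $\Gamma^*$, $\Gamma \subseteq \Sigma$, and $\{a\}$, $a \in \Sigma$, under concatenation and finite union; this keeps the reduction within the input format expected by Theorem~\ref{thm:inclusion_perm}.
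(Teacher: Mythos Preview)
Your proposal is correct and matches the paper's approach: the paper simply states that universality up to permutational equivalence reduces to the inclusion problem of Theorem~\ref{thm:inclusion_perm}, and you have spelled out exactly this reduction by taking $L_1 = \Sigma^*$ and $L_2 = L$. The additional justification that $\Sigma^*$ is an APC expression is a harmless bit of care that the paper leaves implicit.
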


As for commutative languages $L \subseteq \Sigma^*$
we have $\perm(L) = L$, we get the next corollary. This generalizes a corresponding reduction
of complexity for unary alphabets~\cite{DBLP:journals/iandc/KrotzschMT17}.

\begin{corollary}
\label{cor:com_apc_inclusion}
 Fix an alphabet $\Sigma$. Given an APC describing a commutative language, the universality problem is in $\PTIME$.
 Also, given two APCs describing commutative languages,
 the inclusion problem is solvable in polynomial time. 
\end{corollary}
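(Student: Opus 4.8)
The plan is to reduce both problems, for commutative inputs, to the corresponding problems ``up to permutational equivalence'' that were already shown to lie in $\PTIME$. The single observation driving this is that a commutative language $L \subseteq \Sigma^*$ satisfies $\perm(L) = L$ by definition, so the ordinary decision question and its permutational variant coincide on such inputs.

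First I would handle universality. Given an APC expression $L$ over $\Sigma^*$ that denotes a commutative language, we have $L = \Sigma^*$ if and only if $\perm(L) = \Sigma^*$, since $\perm(L) = L$. The right-hand question is exactly the input to Corollary~\ref{cor:universality_perm}, which places it in $\PTIME$ for the fixed alphabet $\Sigma$. Hence universality is in $\PTIME$ on commutative APCs. For inclusion I would argue analogously: given APC expressions $L_1, L_2$ over $\Sigma^*$ denoting commutative languages, $L_1 \subseteq L_2$ holds if and only if $\perm(L_1) \subseteq \perm(L_2)$, because $\perm(L_i) = L_i$ for $i \in \{1,2\}$. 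By Theorem~\ref{thm:inclusion_perm} (which internally relies on Proposition~\ref{prop:NFA_to_PDFA_for_perm_in_P} to build PDFAs for the commutative closures in polynomial time), the latter is decidable in $\PTIME$.

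The only point that deserves a word of care is that commutativity of the input is a \emph{hypothesis}, not something we attempt to verify: we treat this as a promise on the input, and under that promise the two decision problems are literally the same problem, so no extra algorithmic step is needed. There is therefore no real obstacle here -- all the work was done in establishing Theorem~\ref{thm:inclusion_perm} and Corollary~\ref{cor:universality_perm}. This corollary is just the specialization of those results to inputs that are already commutative, and it is worth stating explicitly because it contrasts sharply with the $\PSPACE$-completeness of universality and inclusion for general APCs, even over a binary alphabet.
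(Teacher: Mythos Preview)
Your proposal is correct and matches the paper's own reasoning exactly: the paper derives the corollary from the one-line observation stated just before it, namely that $\perm(L) = L$ for commutative $L$, so Theorem~\ref{thm:inclusion_perm} and Corollary~\ref{cor:universality_perm} apply directly. There is nothing to add or change.
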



\section{Conclusion}

 We have given a sharp upper bound for the number of states needed in a deterministic recognizing
 automata for the commutative closure of APCs.
 Additionally, we have shown that the recognizing automaton
 could be computed in polynomial time for fixed alphabets.
 Using this result, we have shown that the inclusion and universality problem
 modulo permutational equivalence are solvable in polynomial time
 for a fixed input alphabet. This contrasts with the general inclusion and universality problem
 for APCs. Both are $\PSPACE$-complete even for binary alphabets~\cite{DBLP:journals/iandc/KrotzschMT17}.
 For two subclasses of the APC languages, we have given sharp 
 bounds for the commutative closure expressed in the size of the input automata.
 In the case that the language is given by a general chain automaton,
 the result was only established for binary alphabets. The case for larger alphabets
 is still open.

\smallskip \noindent \footnotesize
\textbf{Acknowledgement.} I thank the anonymous reviewers for careful reading, noticing a reoccurring typo in the proof of Theorem~1 that was luckily spotted and fixed and
helping me identifying some unclear formulations throughout the text.

\bibliographystyle{splncs04}
\bibliography{ms} 
\end{document}